\def\bA{ {\mathbf{A}} }
\newtheorem{theorem}{Theorem}
\title{Distributed Probabilistic Bisection Search Using Social Learning}
\begin{document}
%
\maketitle
\begin{abstract}
We present a novel distributed probabilistic bisection algorithm using social learning with application to target localization. Each agent in the network first constructs a query about the target based on its local information and obtains a noisy response. Agents then perform a Bayesian update of their beliefs followed by an averaging of the log beliefs over local neighborhoods. This two stage algorithm consisting of repeated querying and averaging runs until convergence. We derive bounds on the rate of convergence of the beliefs at the correct target location. Numerical simulations show that our method outperforms current state of the art methods.
\end{abstract} 
\begin{keywords}
Probabilistic bisection, consensus, decentralized estimation, convergence rate, belief sharing
\end{keywords}
\vspace{-2mm}

\section{Introduction}
\label{sec:intro}

We consider the problem of distributed bisection search using a network of agents. This problem has applications to stochastic root finding \cite{Waeber:2011}, distributed group testing \cite{Han:2016}, object tracking using cameras \cite{Sznitman:2010}, and human-aided localization \cite{Tsiligkaridis:2014}. The agents are connected through a topology and sequentially search for the unknown target $X^*$. Each agent forms a query $Z_{i,t} = I(X^* \in S_{i,t})$ for some subset $S_{i,t} \subset \mathcal{X}$ based on its local information about the target location and obtains a noisy response, $Y_{i,t+1}$, which it uses to update its belief. After this stage, the agents average their log-beliefs with their neighbors. As this process is repeated after several iterations, the agents converge to the correct consensus. 

Prior work on distributed signal processing includes the consensus literature \cite{Boyd:2006, Dimakis:2010}. Extensions to consensus-plus-innovations algorithms with applications to detection and estimation problems include nodes making noisy observations and implementing a consensus method to spread information around the network. Several such works on distributed estimation include that of \cite{Kar:2011, Kar:2012} which show that distributed estimators are consistent and asymptotically normal. Additionally, \cite{Lalitha:2014, Nedic:2015} studied the problem of distributed detection and proved convergence rates on the rate of learning the correct hypothesis.

The adaptive querying and Bayesian updating are known in the literature as the Probabilistic Bisection Algorithm (PBA). The PBA was first introduced by Horstein \cite{Horstein:1963}. The PBA was shown to be optimal in an information-theoretic sense in the work by Jedynak, et al. \cite{Jedynak:2012}, and the convergence rate of the single-agent PBA was shown to be exponential in Waeber, et al, \cite{Waeber:2013}. The PBA was generalized in \cite{Tsiligkaridis:2014} to multiple agents using a centralized controller strategy, and in \cite{Tsiligkaridis:2015,Tsiligkaridis:2016} using a decentralized algorithm based on belief consensus. In \cite{Tsiligkaridis:2014,Tsiligkaridis:2015,Tsiligkaridis:2016}, the convergence analysis showed that all agents reach consensus to the true target.

The contribution of this paper is to propose and analyze a distributed bisection algorithm using a social learning approach. First, we first derive a social learning algorithm (inspired by the work of Lalitha, et al, \cite{Lalitha:2014}). Second, we derive an asymptotic performance bound that characterizes the rate of learning in terms of the social influence of nodes on the network and the error probability of each agent. Finally, we show using simulations, that our proposed algorithm outperforms the case of no collaboration and improves the distributed bisection search algorithm presented in Tsiligkaridis, et al, \cite{Tsiligkaridis:2015}.

\section{Problem Formulation} \label{sec:BA}

For concreteness, we focus on the one-dimensional case, i.e, $\mathcal{X}=[0,1]$. In this case, the query sets $S_{i,t}$ take the form of an interval $[0,\hat{X}_{i,t}]$, where $\hat{X}_{i,t}$ are the query points. The response to the query $Z_{i,t}=I(X^* \in S_{i,t})$ is modeled as a binary symmetric channel \cite{Jedynak:2012, Tsiligkaridis:2014, Tsiligkaridis:2015} and is given by:
\begin{equation*}
Y_{i,t+1} =  \begin{cases} 
      Z_{i,t} & \text{w.p. } 1 - \epsilon_i \\    
      1 - Z_{i,t} & \text{w.p. } \epsilon_i 
   \end{cases}
\end{equation*}
In this paper, the error probabilities are constant.  In \cite{Tsiligkaridis:2014}, a more general model was considered where the error probability increased as the target localization error decreased. 

We define $p_{i,t}(x)$ as the posterior distribution on the target space $\mathcal{X} = [0,1]$ for agent $i$ at time $t$.  We also denote the corresponding CDF as $F_{i,t}(x) = \int_0^x p_{i,t}(u) du$. The posterior distributions at time $t$ are measurable with respect to the noisy responses $\{Y_{i,\tau+1}\}_{\tau\leq t}$ up to time $t$. The proposed social learning algorithm consists of two stages (see Algorithm $1$). In the first stage, agents perform a probabilistic bisection and update their beliefs using the noisy response. In the second stage, the log-beliefs are averaged among local neighbors to spread information around the network.

The observation probability mass function used in the first stage is given by:
\begin{equation} \label{eq:li}
	l_i(x,y_{i,t+1}) = f_1^{(i)} (y) I (x \leq \hat{X}_{i,t}) + f_0^{(i)} (y) I (x > \hat{X}_{i,t}) 
\end{equation}
where $f_1^{(i)}(y)=(1-\epsilon_i)^{I(y=1)} \epsilon_i^{I(y=0)}, f_0^{(i)}(y)=1-f_1^{(i)}(y)$. \footnote{We remark that the normalizing factor here $\int_0^1 p_{i,t}(x) l_i(x,y_{i,t+1}) dx$ is equal to $1/2$ \cite{Tsiligkaridis:2015}.}

In the second stage, we assume that the social interaction matrix $\textbf{A}$ is a stochastic matrix corresponding to a strongly connected, aperiodic graph $G=(\mathcal{N},E)$.  Here, $\mathcal{N}$ represents the nodes of the network and $E$ is the set of edges.  An edge $(i,j) \in E \text{ exists iff } A_{i,j} > 0$.

\begin{algorithm}
\caption{ Distributed Probabilistic Bisection Algorithm } \label{alg:alg1}
\begin{algorithmic}[1]
\STATE \textbf{Input:}  {$G=(\mathcal{N},E), \bA=\{A_{i,j}\}, \{\epsilon_i\}$}
\STATE \textbf{Output:} {$\{\hat{X}_{i,t}:i\in \mathcal{N}\}$}

    \STATE Initialize $p_{i,0}(\cdot)$ to be positive everywhere.
   
    \REPEAT
        \STATE \underline{Stage 1}: For each agent $i\in \mathcal{N}$: \\
				\STATE  \quad Bisect posterior density to obtain query point $\hat{X}_{i,t} = F_{i,t}^{-1}(\frac{1}{2})$, and obtain noisy response $y_{i,t+1} \in \{0,1\}$. \\
				\STATE  \quad Belief update: \\
        \begin{equation} \label{eq:bayes_update}
            \tilde{p}_{i,t}(x) = p_{i,t}(x) \cdot 2l_{i}(x,y_{i,t+1}) 
        \end{equation}
        \STATE \underline{Stage 2}: For each agent $i\in \mathcal{N}$: \\
        \STATE  \quad Average log-beliefs: \\
        \begin{equation} \label{eq:avg_log_beliefs}
						p_{i,t+1}(x) = \frac{\prod_{j=1}^{N} \tilde{p}_{j,t}(x)^{A_{i,j}}}{\int_0^1 \prod_{j=1}^{N} \tilde{p}_{j,t}(x)^{A_{i,j}} dx}
        \end{equation}			
    \UNTIL {convergence}
\end{algorithmic}
\end{algorithm}

\vspace{-3mm}

\section{Performance Analysis}
\label{sec:NA}
In this section, we derive a lower bound on the posterior distribution $p_{i,t}(x)$ at the target $X^*$.  Our result shows that for large $t$, we have $p_{i,t}(X^*) \geq 2^{tK}$ for all agents $i \in \mathcal{N}$.
\begin{theorem}
Assume that $\bA$ corresponds to an irreducible, aperiodic Markov chain, and $p_{i,0}(X^*)>0$ for all $i$. Then, the following asymptotic result holds:
\begin{equation}
	\liminf_{t \to \infty} \frac{\log_2 p_{i,t}(X^*)}{t} \geq \sum_{i=1}^N v_i C(\epsilon_i) = K
\end{equation}
where $\mathbf{v}$ is the normalized left eigenvector of $\bA$ which corresponds to the unit eigenvalue \footnote{This is also known as the stationary distribution of the Markov chain.} and $C(\epsilon_i)$ is the capacity of the binary symmetric channel \cite{Cover:2006}.
\end{theorem}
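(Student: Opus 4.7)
The plan is to log-linearise both update steps, unroll the resulting vector recursion for $\boldsymbol{\phi}_t := (\log_2 p_{1,t}(X^*), \ldots, \log_2 p_{N,t}(X^*))^{T}$, and control the asymptotics using Perron-Frobenius together with the strong law of large numbers. Taking base-$2$ logarithms of (\ref{eq:bayes_update}) and (\ref{eq:avg_log_beliefs}) at $x = X^*$ gives
\begin{equation*}
\boldsymbol{\phi}_{t+1} \;=\; \bA\,\boldsymbol{\phi}_t \;+\; \mathbf{1} \;+\; \bA\,\mathbf{L}_t \;+\; \mathbf{b}_t,
\end{equation*}
where $\mathbf{1}$ is the all-ones vector (from the factor $2$ in the Bayes step), $(\mathbf{L}_t)_j := \log_2 l_j(X^*,Y_{j,t+1})$, and $(\mathbf{b}_t)_i := -\log_2 \int_0^1 \prod_j \tilde{p}_{j,t}(x)^{A_{ij}}\,dx$. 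A pointwise weighted AM-GM inequality, applied to the densities $\tilde{p}_{j,t}$, shows that the integral is at most $1$, so $\mathbf{b}_t \geq \mathbf{0}$ and can be discarded when deriving a lower bound.

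The next step is to identify the law of $\mathbf{L}_t$. Because the binary symmetric channel is symmetric, a short case analysis over the two possibilities $Z_{j,t}\in\{0,1\}$ shows that, regardless of the query point $\hat{X}_{j,t}$ and the history, $l_j(X^*,Y_{j,t+1})$ equals $1-\epsilon_j$ with probability $1-\epsilon_j$ and $\epsilon_j$ with probability $\epsilon_j$. Hence $\{(\mathbf{L}_t)_j\}_{t\ge 0}$ is i.i.d.\ and bounded with mean $\mathbb{E}[(\mathbf{L}_t)_j] = -H(\epsilon_j)$, where $H$ is the binary entropy.

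Unrolling the recursion and using $\bA\mathbf{1} = \mathbf{1}$ gives
\begin{equation*}
\boldsymbol{\phi}_{t+1} \;\geq\; \bA^{\,t+1}\boldsymbol{\phi}_0 \;+\; (t+1)\,\mathbf{1} \;+\; \sum_{s=0}^{t}\bA^{\,t-s+1}\mathbf{L}_s.
\end{equation*}
Since $\bA$ is an irreducible, aperiodic stochastic matrix, the Perron-Frobenius theorem gives $\bA^{k}\to \mathbf{1}\mathbf{v}^{T}$ at a geometric rate. Replacing $\bA^{t-s+1}\mathbf{L}_s$ by $\mathbf{1}(\mathbf{v}^{T}\mathbf{L}_s)$ inside the sum and applying the SLLN to the i.i.d.\ bounded sequence $\mathbf{v}^{T}\mathbf{L}_s$ yields $\frac{1}{t+1}\sum_{s=0}^{t}\mathbf{v}^{T}\mathbf{L}_s \to -\sum_{j}v_j H(\epsilon_j)$ almost surely. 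Dividing the unrolled recursion by $t$ and taking $\liminf$ then gives coordinatewise
\begin{equation*}
\liminf_{t\to\infty}\frac{\phi_{i,t}}{t} \;\geq\; 1 - \sum_{j}v_j H(\epsilon_j) \;=\; \sum_{j}v_j C(\epsilon_j) \;=\; K.
\end{equation*}

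The main obstacle is justifying the Perron-Frobenius substitution inside the Cesàro sum, i.e.\ showing that $\frac{1}{t+1}\sum_{s=0}^{t}\bigl(\bA^{\,t-s+1} - \mathbf{1}\mathbf{v}^{T}\bigr)\mathbf{L}_s \to 0$ almost surely. This is where care is needed: one splits the sum at some $s = t - \lceil \log t\rceil$, uses the uniform boundedness of $\mathbf{L}_s$ together with the geometric decay $\|\bA^{k} - \mathbf{1}\mathbf{v}^{T}\| \lesssim \rho^{k}$ to control the recent-index block, and the fact that only $O(\log t)$ terms remain in the tail block to control the rest. The transient term $\bA^{t+1}\boldsymbol{\phi}_0/t$ vanishes because $\boldsymbol{\phi}_0$ is finite (using $p_{i,0}(X^*) > 0$), and the discarded $\mathbf{b}_t$ term only makes the inequality tighter.
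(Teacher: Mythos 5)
Your proposal is correct and follows essentially the same route as the paper's proof: drop the normalizer via a Jensen/weighted AM--GM bound, unroll the linear log-belief recursion, replace powers of $\bA$ by $\mathbf{1}\mathbf{v}^{T}$ using ergodicity of the chain, and apply the law of large numbers to the channel log-likelihoods, yielding $K=\sum_j v_j C(\epsilon_j)$. Your explicit case analysis showing that $l_j(X^*,Y_{j,t+1})$ is i.i.d.\ with the two-point law taking value $1-\epsilon_j$ w.p.\ $1-\epsilon_j$ and $\epsilon_j$ w.p.\ $\epsilon_j$ regardless of the adaptively chosen query point is a worthwhile detail that the paper asserts without justification, but the argument is otherwise the same.
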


\begin{proof}
Consider the case for a fixed node $i$. From Algorithm \ref{alg:alg1}, we may rewrite (\ref{eq:avg_log_beliefs}) as:
\begin{align*}
\log_{2} p_{i,t+1}(x) &= \sum_j A_{i,j} \log_{2} \tilde{p}_{j,t}(x)  \\
      & - \underbrace{\log_{2} \left( \int_{0}^1 2^{\sum_{j=1}^N A_{i,j} \log_{2} \tilde{p}_{j,t}(x) } dx \right)}_{D_t} 
\end{align*}

Using Jensen's inequality, we have:
\begin{align*}
D_t &\leq \log_2 \left( \int_0^1 \sum_{j=1}^N A_{i,j} \tilde{p}_{j,t}(x) dx \right) = \log_2 \sum_{j=1}^N A_{i,j} = 0
\end{align*}

Since $D_t \leq 0$, it follows that:
\begin{align}
\log_{2} p_{i,t+1}(x) \geq \sum_{j=1}^N A_{i,j} \log_{2} \tilde{p}_{j,t}(x) 
\end{align}

Evaluating the above inequality at $x = X^*$ and using Equation (\ref{eq:bayes_update}) along with $l_j(X^*,y_{j,t+1}) = P(y_{j,t+1}|z_{j,t})$:
\begin{align} \label{eq:EQ6} \hspace{-4mm}
\log_{2} p_{i,t+1}(X^*) &\geq \sum_{j=1}^N A_{i,j} \log_{2} p_{j,t}(X^*) + \sum_{j=1}^N A_{i,j} \log_{2} (2P(y_{j,t+1}|z_{j,t})) 
\end{align}

By using new variables:  $q_{i,t} \stackrel{\text{def}}{=} \log_{2} (p_{i,t} (X^*))$ and $w_{j,t} \stackrel{\text{def}}{=} \log_{2} (2P(y_{i,t+1}|z_{i,t}))$, Equation (\ref{eq:EQ6}) becomes:
\begin{equation}
q_{i,t+1} \geq \sum_{j=1}^N A_{i,j} q_{j,t} + \sum_{j=1}^N A_{i,j} w_{j,t}
\end{equation}

Using induction, it follows that:
\begin{equation}
q_{i,t} \geq \sum_{\tau = 1}^t \sum_{j=1}^N A_{i,j}^{\tau} w_{j,t-\tau} + \sum_{j=1}^N A_{i,j}^t q_{j,0}
\end{equation}

Dividing both sides by $t$ and taking the limit, it follows that:

\begin{align} \label{eq:eqNew}
	\liminf_{t \to \infty} \frac{q_{i,t}}{t} &\geq \lim_{t \to \infty} \frac{1}{t} \sum_{\tau = 1}^t \sum_{j=1}^N A_{i,j}^{\tau} w_{j,t-\tau} 
	  + \lim_{t \to \infty} \frac{1}{t} \sum_{j=1}^N A_{i,j}^t q_{j,0} 
\end{align}

Using $0 < p_{j,0}(X^*) < \infty$ and $A_{i,j}^{t} \leq 1$, the second term in Equation (\ref{eq:eqNew}) vanishes and we have: 

\begin{equation} \label{eq:Ki}
	\liminf_{t \to \infty} \frac{q_{i,t}}{t} \geq \lim_{t \to \infty} \frac{1}{t} \sum_{\tau = 1}^t \sum_{j = 1}^N A_{i,j}^{\tau} w_{j,t - \tau} =: K_i
\end{equation}

Next, we evaluate the rate exponent $K_i$ using the convergence to a stationary distribution $\mathbf{v}$ for aperiodic Markov chains. Specifically, using Theorem 7 in Sec. 2.7 from \cite{Hoel:1972}:
\begin{equation*}
	\lim_{t\to\infty} A_{i,j}^t = v_j
\end{equation*}
for any two nodes $(i,j)$. Thus, for any arbitrarily small $\delta>0$, there exists $T=T(\delta)$ such that $|A_{i,j}^t - v_j|\leq \delta$ for all $t\geq T$.

Splitting the sum in (\ref{eq:Ki}):
\begin{align} \label{eq:SplitEq}
 K_i  &= \lim_{t \to \infty} \Bigg[ \frac{1}{t} \sum_{j=1}^N \sum_{\tau = 1}^{T-1} A_{i,j}^{\tau} w_{j,t - \tau} + \frac{1}{t} \sum_{j=1}^N \sum_{\tau = T}^{t} A_{i,j}^{\tau} w_{j,t - \tau} \Bigg] 
\end{align}
The first term in Equation (\ref{eq:SplitEq}) is negligible since it can be upper bounded by $\lim_{t \to \infty} \frac{1}{t} \sum_{j = 1}^N \sum_{\tau = 1}^{T-1} |w_{j,t - \tau}|$.  Since $|w_{j,t-\tau}| \leq \text{max} \{|\log_2(2(1-\epsilon_j))|,|\log_2(2\epsilon_j)|\} = B_j$, $\lim_{t \to \infty} \frac{1}{t} \sum_{j = 1}^N \sum_{\tau = 1}^{T-1} |w_{j,t - \tau}| \leq \lim_{t \to \infty} \frac{1}{t} (T-1) B = 0$, where $\sum_{j = 1}^N B_j = B$. The second term in Equation (\ref{eq:SplitEq}) dominates and is given by:

\begin{align} \label{eq:eqfin} \hspace{-6mm}
K_i = \lim_{t\to\infty} \frac{1}{t} \sum_{j=1}^{N} \sum_{\tau=T}^{t} [A_{i,j}^{\tau} - v_j] w_{j,t-\tau} + \lim_{t\to\infty} \frac{1}{t} \sum_{j=1}^{N} \sum_{\tau=T}^{t} v_j w_{j,t-\tau} 
\end{align}
The first term in Equation (\ref{eq:eqfin}) is negligible since:
\begin{align}
&\lim_{t\to\infty} \frac{1}{t} \sum_{j=1}^{N} \sum_{\tau=T}^{t} |A_{i,j}^{\tau} - v_j| |w_{j,t-\tau}| \nonumber \\
&\leq \lim_{t\to\infty} \frac{\delta}{t} \sum_{j=1}^{N} \sum_{\tau=T}^{t} |w_{j,t-\tau}| \leq \lim_{t\to\infty} \frac{\delta}{t} \sum_{\tau=T}^{t} \sum_{j=1}^{N} B_j  = \delta B
\end{align}
The second term in Equation (\ref{eq:eqfin}) dominates and using the LLN:
\begin{align} \label{eq:lbound_collab}
	K_i &= \sum_{j=1}^{N} v_j \Bigg[ \lim_{t\to\infty} \frac{1}{t} \sum_{\tau=T}^{t} w_{j,t-\tau} \Bigg] \nonumber \\
		  &= \sum_{j=1}^{N} v_j \mathbb{E}[w_{j,t-\tau}] = \sum_{j=1}^{N} v_j C(\epsilon_j) =: K
\end{align}

The proof is complete.

\end{proof}

In Theorem $1$, we derived a lower bound on the rate of learning of each agent for the distributed bisection algorithm (Algorithm $1$).  Since the rate exponent, $K_i$, is independent of the agent index, $i$, the lower bound is the same for all agents and depends on the channel capacities and the eigenvector centrality, $\textbf{v}$ \cite{Lalitha:2014}.  The higher the value of $v_i$, the larger the contribution that node $i$ has on the network learning rate, $K$.  

Using an analogous method, we can analyze Equation (\ref{eq:bayes_update}) for the case of no inter-agent collaboration. In this case, $\bA = \textbf{I}_N$, and using Equations (\ref{eq:bayes_update}) and (\ref{eq:avg_log_beliefs}), $p_{i,t+1}(X^*) = \tilde{p}_{i,t}(X^*) = p_{i,t}(X^*) + 2P(y_{i,t+1}|z_{i,t})$.  Unrolling the preceding equation over $t$ steps, we see that $p_{i,t}(x) = \prod_{j=0}^{t-1} 2P(y_{i,j+1}|z_{i,j})$.  By taking the logarithm of both sides, dividing by $t$, and applying the LLN, the convergence rate of the agent beliefs becomes: $\lim_{t \to \infty} \frac{\log_{2} p_{i,t}(x)}{t} = C(\epsilon_i)$.  For the case of a homogeneous network, i.e. $\epsilon_i = \epsilon, \forall i$, our distributed algorithm asymptotically learns faster than when agents do not collaborate.  In general, the rate exponent of Theorem $1$ is a linear combination of agents' capacities weighed by the eigenvector centrality. This bound is expected to be maximized by placing nodes with low error probability in central locations in networks where they can have a strong social influence on other nodes.

\section{Simulations}
\label{sec:sim}

To validate and strengthen the preceding performance analysis, simulations are performed that show the proposed method achieves better performance than the belief consensus approach of \cite{Tsiligkaridis:2015} and the case of no inter-agent collaboration. We note that the computational complexity of our proposed algorithm is on the same order as that of \cite{Tsiligkaridis:2015} as both algorithms consist of a Bayesian bisection update and a local averaging operation.  

     We randomly generate an irreducible $N \times N$ adjacency matrix, modeling a random geometric graph \cite{Gupta:2000} (see Figure \ref{fig:network}). In this setup, $N = 20$ and $18$ agents have high error probabilities $(\epsilon_{i} = 0.40)$ while the remaining $2$ have low error probabilities $(\epsilon_{i} = 0.05)$.  The low error agents are the ones with the highest two $v_i$'s, so they can positively affect the high error nodes around them.  Figure \ref{fig:network} displays the 2 low error nodes and their connections in blue along with the other nodes and their connections in black. 

     Figure \ref{fig:mse} shows the average network Mean Squared Error (MSE), $MSE_{avg}(t) = \frac{1}{N} \sum_{i=1}^N ( \hat{X}_{i,t} - X^*)^2$, and the worst-case network MSE, $MSE_{max}(t) = \max_i ( \hat{X}_{i,t} - X^*)^2$, averaged over $150$ Monte Carlo trials.  Our proposed method outperforms the belief consensus approach of \cite{Tsiligkaridis:2015} and the case of no collaboration. Regarding the average MSE, all three methods converge to the consensus but our proposed method does so much quicker than the other methods.  In terms of worst case MSE, our proposed method remains robust while that of \cite{Tsiligkaridis:2015} converges slower and the method with no collaboration has not converged after 75 time steps. From this, we see that the proposed method performs very well even in the worst case and that inter-agent collaboration is advantageous in our target localization task. 

     In addition to the MSE, we also plot $\log_2 p_{4,t}(X^*)$ for both the proposed method and that of no collaboration. The $\log_2 p_{4,t}(X^*)$ time series represent a base-2 logarithm of agent $4$'s belief, an agent with a high error probability. We analytically showed that for the case of no collaboration, a graph of the $\log_2 p_{i,t}(x^*) \text{ vs. } t$ would be a line with slope $K = C(\epsilon_i)$ that would serve as an exact bound; with our proposed method, the same graph should be a line with the slope at least as large as the lower bound $K = \sum_{j=1}^N v_j C(\epsilon_{i})$ (\ref{eq:lbound_collab}).  The analysis results are affirmed by the simulation results in Figure \ref{fig:log}.  Here, the dotted lines represent the two preceding bounds and the solid lines represent the calculations of $\log_2 p_{4,t}(X^*)$ for each iteration.  The solid red line represents the no collaboration method and it lines up with the exact bound displayed in the dotted red line.  The solid blue line represents our proposed method and its slope exceeds that displayed by the lower bound with the dotted blue line.  The simulations verify our analysis and show that the beliefs generated by our proposed method concentrate fast on the true target location. 

		With our distributed algorithm, it is important to note that we improve the network-wide performance (average and worst case) with a small penalty in performance in the case of no collaboration for the lowest error agent. 

%
%

\section{Conclusion}
\quad In this paper, we proposed a new distributed probabilistic bisection algorithm for target localization and derived a lower bound on the rate of convergence.  Through analysis and simulation, we show that our proposed method attains superior performance to other state of the art methods in terms of rate of convergence and MSE.  For future work, we can pursue an analysis of the MSE convergence rate.

\begin{figure}[h]
    \centering
    \includegraphics[width=0.54\textwidth]{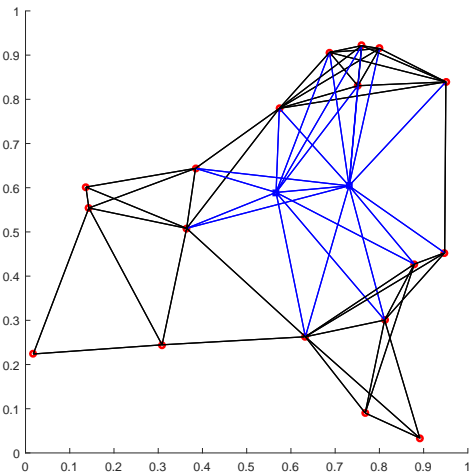} 
		\vspace{-6mm}
    \caption{\footnotesize{Geometric random graph. The blue nodes are the low error agents and those in red are the high error agents.}}
		\label{fig:network}
\end{figure} 

\begin{figure}[h]
    \centering
    \includegraphics[width=0.5\textwidth]{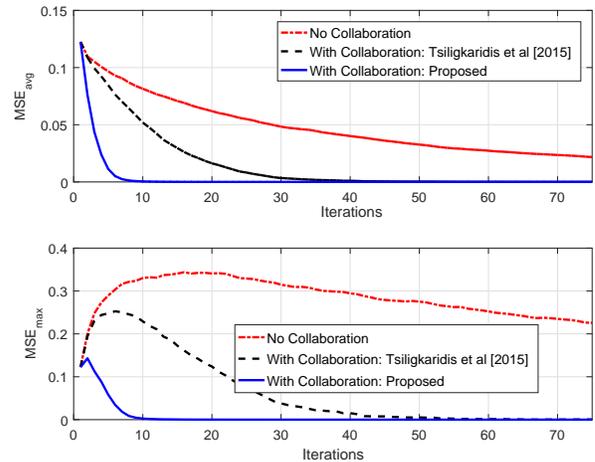}
		\vspace{-5mm}
    \caption{\footnotesize{Average (top) and worst-case (bottom) MSE performance for our proposed social learning algorithm, the belief consensus approach of \cite{Tsiligkaridis:2015}, and the case of no collaboration.}}
		\label{fig:mse}
\end{figure} 

\begin{figure}[h]
    \centering
    \includegraphics[width=0.5\textwidth]{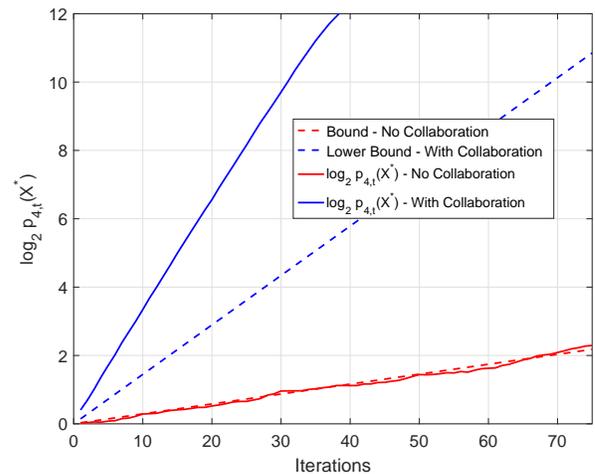}
		\vspace{-3mm}
    \caption{\footnotesize{Concentration of beliefs as a function of iteration. The belief of the high error node, agent 4, of our proposed method concentrates to the target location with a slope of $0.32$ as opposed to the no collaboration case with the slope $0.03$ for the same node. Our method provides $10\times$ improvement over the case of no collaboration regarding the worst case scenario.}}
		\label{fig:log}
\end{figure}

\clearpage
\bibliographystyle{IEEEbib}
\bibliography{refs}

\end{document}